\documentclass[journal]{IEEEtran}

\usepackage{amsmath,amssymb,amsthm,bm,mathtools}
\usepackage{graphicx}
\usepackage[colorlinks=true,allcolors=blue]{hyperref}
\usepackage{booktabs}
\usepackage{xcolor}
\newtheorem{theorem}{Theorem}
\newtheorem{corollary}{Corollary}
\newtheorem{proposition}{Proposition}
\newtheorem{lemma}{Lemma}
\theoremstyle{definition}
\newtheorem{assumption}{Assumption}

\newtheorem{example}{Example}

\newcommand{\KL}[2]{D_{\mathrm{KL}}\!\left(#1\,\middle\|\,#2\right)}
\newcommand{\E}{\mathbb{E}}
\newcommand{\Nloc}{N_{\mathrm{local}}}
\newcommand{\Njoint}{N_{\mathrm{joint}}}
\newcommand{\Ph}{\Phi}
\newcommand{\rep}{\mathrm{rep}}
\newcommand{\tru}{\mathrm{joint}}
\newcommand{\diam}{\mathrm{diam}}
\newcommand{\Fcal}{\mathcal F}
\newcommand{\inc}{\rightarrow}
\newcommand{\Fst}[1]{\Fcal(#1)}
\newcommand{\Fr}[2]{\Fcal_{#1\inc#2}}
\newcommand{\dBcog}{\mathrm{dB_{cog}}}

\begin{document}


\title{6G Reasoning Agent Societies: Composition,\\
Scaling, and Consistency-Time Laws}

\author{
Hatim~Chergui,~\IEEEmembership{Senior~Member,~IEEE}, Carolina Fern\'{a}ndez-Mart\'{i}nez, Mustapha~Benjillali,~\IEEEmembership{Senior~Member,~IEEE,} Mehdi~Bennis,~\IEEEmembership{Fellow,~IEEE},  and Merouane~Debbah,~\IEEEmembership{Fellow,~IEEE}

\IEEEcompsocitemizethanks{\IEEEcompsocthanksitem H. Chergui and C. Fern\'{a}ndez-Mart\'{i}nez are with the i2CAT Foundation, Spain. (e-mails: chergui@ieee.org, name.surname@i2cat.net)}
\IEEEcompsocitemizethanks{\IEEEcompsocthanksitem M.~Benjillali is with INPT, Morocco.  (e-mail: benjillali@ieee.org).}
\IEEEcompsocitemizethanks{\IEEEcompsocthanksitem M. Bennis is with the University of Oulu, Finland (e-mail: mehdi.bennis@oulu.fi).}
\IEEEcompsocitemizethanks{\IEEEcompsocthanksitem M. Debbah is with the Research Institute for Digital Future, Khalifa University, 127788 Abu Dhabi, UAE (e-mail: merouane.debbah@ku.ac.ae).}
\IEEEcompsocitemizethanks{\IEEEcompsocthanksitem This work was supported by grants DGR-2026-2-DIGITALSHIELD, funded by the Research and Universities Department of the Catalonian Government as well as by COALESCE-6G PID2024-163028OB-I00, funded by MICIU/AEI/10.13039/501100011033/FEDER, EU.}
}

\maketitle

\begin{abstract}
Emerging 6G control planes would compose Large Language Model (LLM) agents that exchange interpreted conclusions, so hallucinations can propagate even over error-free transport. Recent architectures audit each message locally against the sender's observation mask, but isolated audits miss collective behavior. Modeling the agent society as a cellular sheaf, we establish three laws valid on arbitrary topologies. First, the \emph{composition law}: network-wide divergence exceeds the sum of per-agent audit scores by exactly their hidden total correlation. This blind spot masks up to 23\% of joint divergence and, because the declared signal cancels in the cognitive signal-to-noise ratio (SNR), overstates it by $1.14$\,dB purely through topology; the resulting trust inflation preserves valid consensus states but slows convergence. Second, the \emph{scaling law}: reconciliation decouples across metric counters, so the worst-connected dimension sets the pace. Consistency time grows for a society size $n$ as $\Omega(n^{2/d})$ on polynomial-growth topologies of degree $d$, and paths breach a $1$\,s near-real-time budget at $n^\ast=13$ agents, whereas expanders ($\Theta(1)$) and radius-3 local audits ($O(1)$, $\approx300$\,ms) stay bounded. Third, the \emph{consistency-time law}: the timing error from omitted correlations is one-sided and capped by the worst-measured link. These laws, validated on simulated topologies and deployed LLM societies, give provable worst-case latency ceilings for near-real-time operation.
\end{abstract}

\begin{IEEEkeywords}
6G, agentic systems, cellular sheaves, cognitive SNR, large language models, O-RAN, theory of mind, topology, trust.
\end{IEEEkeywords}

\section{Introduction}

\IEEEPARstart{P}{roposals} for next-generation O-RAN architectures are
rethinking the near-real-time RAN Intelligent Controller (RIC). Instead of
rigid, rule-based xApps, these networks employ LLM agents that do more than
execute code: they interpret E2 measurement reports, form internal beliefs
about the state of the network's cells, and negotiate resource allocation in
natural language before acting~\cite{ref:tmforum,ref:bariah}. This leap in
capability introduces a blind spot. Because the agents are proprietary black
boxes, an operator cannot audit their internal logic to verify how a
conclusion was reached, and the control plane is no longer transmitting bits
reliably but conveying interpreted \emph{meaning}~\cite{ref:strinati,ref:park}.
Protection has to be redefined accordingly. In a classical control plane,
resilience is transport integrity: a packet delivered intact is correct. In an
agentic one, a perfectly delivered conclusion can still be a hallucination
inferred from misunderstood or unmonitored metrics, so the network must secure
\emph{trust} rather than integrity~\cite{ref:tom,ref:saad}, which means
endowing it with a machine Theory of Mind---the ability to judge an agent's
credibility from its behavior and interactions~\cite{ref:kosinski,ref:ullman}.

\subsection{Literature Review}

Four lines of work bear on that problem and none of them closes it.

The first is the agentic control plane itself. The near-real-time RIC, its E2
interface and the xApp model that LLM agents are proposed to inhabit are well
characterized~\cite{ref:polese}, and multi-agent frameworks show that agents
given roles, memories and a protocol do coordinate on non-trivial
tasks~\cite{ref:tmforum,ref:bariah,ref:park}. The same literature establishes
the failure mode that motivates auditing: a language model produces fluent,
confident statements unsupported by its inputs~\cite{ref:ji}, and at the
receiver those are indistinguishable from well-founded ones. Outputs are
nevertheless judged one at a time, which is the assumption this paper prices.

The second is semantic and goal-oriented communication, which moves the design
objective from bit fidelity to the fidelity of conveyed meaning and gives an
information-theoretic account of what a receiver needs for its
task~\cite{ref:strinati,ref:saad,ref:gunduz}. It says what should be preserved
end to end, not what an \emph{evaluator} of meaning can and cannot see---a
different question, and the one a trust layer faces.

The third is trust and reputation. Scores are accumulated per interaction and
aggregated by sums, averages or Bayesian updates~\cite{ref:josang}, and in
agentic control planes each message is scored against an expectation
reconstructed from the sender's public observation mask~\cite{ref:tom}. The
locality is not an oversight but a necessity, since a near-real-time loop
cannot run joint inference over message histories. Every one of these
aggregation rules, however, is exact only when the evidence being combined is
independent---and circulating conclusions introduce dependence.

Fourth, the spectral theory of agreement models network consensus as a linear diffusion process, $\dot{x} = -Lx$, where nodes iteratively average their states with neighbors~\cite{ref:olfati,ref:chung,ref:mohar}. The rate of this diffusion follows a $1/\lambda_2$ relaxation law, meaning the graph Laplacian's spectral gap $\lambda_2$ controls how fast differences dissipate. Consequently, the edge weights of $L$ can be treated as tunable design parameters and optimized directly to maximize convergence speed~\cite{ref:xiao}. Cellular sheaves generalize this to agents holding
different quantities that agree only where their vocabularies
overlap~\cite{ref:hansen}, and sheaf diffusion has modeled disagreement among
agents whose private frames differ~\cite{ref:hansenop}. Classical information
theory supplies the measures: Kullback--Leibler
geometry~\cite{ref:csiszar}, total correlation as the multivariate extension
of mutual information~\cite{ref:watanabe}, and the composition results
separating single-link from cascaded capacity~\cite{ref:cover,ref:covergamal}.

\subsection{Gaps and Contributions}

Three gaps remain. First, the composition results of network information
theory govern \emph{transport}: what a cascade of channels can carry, not what
a cascade of \emph{evaluations} can detect. With no counterpart for
credibility, a trust layer that sums per-link scores inherits the sum's
independence assumption, and with it a cost no better expectation can
remove.

Second, consensus theory prices agreement for agents holding the same state
but disagreeing about its value. Sheaf
Laplacians~\cite{ref:hansen,ref:hansenop} handle different vocabularies, but
ask whether opinions converge, and to what---not how long that takes under the
topology the subscriptions induce, which is what a near-real-time budget is
spent on.

Third, the weights here are measurements, not design
variables. Where~\cite{ref:xiao} chooses weights to make a network fast, a
trust layer reads them from scores that Theorem~\ref{thm:main} in this paper shows to be
optimistic, then derives its convergence time from them. A controller
sizing a budget on that number trusts a self-report whose error the
literature does not bound.

We close these gaps with three laws, each stated for an arbitrary topology,
proved from the architecture and validated without fitted parameters:

\begin{itemize}
\item \emph{The composition of trust} (Section~\ref{sec:identity}). Link-by-link
evaluation underestimates the society's divergence by exactly the total
correlation of its messages, for any topology and per-agent expectations.
\item \emph{Scaling and consistency time} (Section~\ref{sec:scaling}). A
society reconciles as slowly as its worst-connected counter, so topology alone
fixes consistency time: $\Omega(n^{2/d})$ under polynomial growth, $\Theta(1)$
on expanders, and $O(1)$ within a fixed neighborhood.
\item \emph{Bounding the reported consistency time}
(Section~\ref{sec:scaling}). The reported consistency time never exceeds the
true one and falls short of it by at most the worst-audited link's inflation,
so near-real-time budgets can be sized on it safely.
\item \emph{Validation} (Section~\ref{sec:results}). The laws are tested on
simulated sheaves swept over topology and size, and on LLM agent societies
prompted with O-RAN measurement reports.
\end{itemize}
\section{System Model}\label{sec:model}

\begin{figure}[tb]
  \centering
  \includegraphics[width=\linewidth]{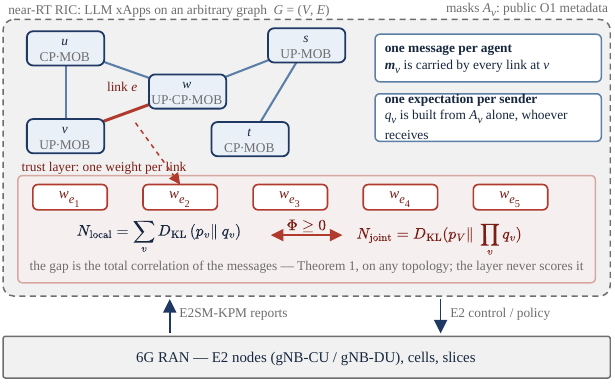}
  \caption{A society of LLM xApps on a near-real-time RIC. The agents interact over graph $G=(V,E)$, each monitoring the RAN through a public mask $A_v$. The trust layer scores each link against an expectation built from the sender's mask alone.}
  \label{fig:sys}
\end{figure}

\subsection{Agents, Masks, Messages, and Beliefs}\label{ssec:recall}

The control plane operates as a society of agents connected over a graph
$G=(V,E)$, shown in Fig.~\ref{fig:sys}; $\mathcal N(v)$ denotes the neighbors
of agent $v$. Each agent subscribes to a specific set of E2 counters, forming
its public observation mask $A_v$, and observes the network metrics
$o_v=\bm\theta|_{A_v}$ for a given cell state $\bm\theta$. Agents score the
cell over a shared vocabulary $\mathcal C$ of scoring dimensions $c$---in the
running example, user-plane load (\texttt{UP}), control-plane signaling
(\texttt{CP}), and mobility stress~(\texttt{MOB})---each fed by one or more E2
counters; $\mathcal C_v\subseteq\mathcal C$ denotes the dimensions covered by
$A_v$.

Each round, agent $v$ emits a message over the whole vocabulary, produced by
its proprietary policy $\pi_v$ from its own observation and the messages its
neighbors sent in the previous round. Its \emph{belief} is the part of that
message its mask backs:
\begin{equation}
\mathbf m_v=\pi_v\big(o_v,\{\mathbf m_u\}_{u\in\mathcal N(v)}\big)\in[0,100]^{\mathcal C},
\qquad
\mathbf x_v=\mathbf m_v\big|_{\mathcal C_v}.
\label{eq:msg}
\end{equation}
Each entry of $\mathbf x_v$ is a score on the common $0$--$100$ scale, the mean
of a belief distribution with variance $\sigma^2_{v,c}$ on dimension~$c$; the
belief's precision matrix is $\Lambda_v=\mathrm{diag}(\sigma^{-2}_{v,c})_{c\in\mathcal C_v}$.
The remainder of the message, $\mathbf m_v|_{\mathcal C\setminus\mathcal C_v}$,
is asserted without evidence. The message is indexed by vertex because the
same vector traverses every incident link. Because masks are public O1
metadata, an operator can invert a generic emission policy to calculate an
expected belief $\hat{\mathbf x}_v$ on the same dimensions, based strictly on
what $v$ is allowed to see---its observation $o_v$, with no neighbor input.

Crucially, asking agents with asymmetric views to agree on everything is
ill-posed. If a signaling-focused agent lacks user-plane counters and a
user-plane agent lacks control-plane counters, their beliefs do not even range
over the same dimensions, and forcing $\mathbf x_u=\mathbf x_v$ would require
both to guess blindly on unmonitored metrics. Consensus is meaningful only on
the \emph{projection} onto mutually observable metrics. The mathematical
framework that formalizes this kind of localized, overlapping agreement is a
cellular sheaf~\cite{ref:hansen}.

\subsection{A Running Example}\label{ssec:example}

To illustrate why partial agreement is necessary and how the trust layer
evaluates it, consider the following scenario.

\begin{example}[Two agents and a signaling storm]\label{ex:two}
Two agents monitor a single cell: \emph{Signaling Watch} ($u$, for the control plane) and \emph{Slice Guardian} ($v$, for the user plane); both also monitor mobility. Their masks $A_u$ and $A_v$ capture data during a signaling storm occurring over a lightly loaded user plane.

\begin{center}\footnotesize
\begin{tabular}{@{}lllr@{}}
\toprule
Mask & E2SM-KPM counter & Dimension & Reading \\
\midrule
$A_u$ & \texttt{RACH.PreambleDedCell} & \texttt{CP} & 2740 \\
      & \texttt{MM.HoExeFailCell}     & \texttt{MOB} & 46 \\
\addlinespace[1pt]
$A_v$ & \texttt{RRU.PrbTotDl}         & \texttt{UP} & 32 \\
      & \texttt{DRB.UEThpDl}          & \texttt{UP} & 51 \\
      & \texttt{MM.HoExeFailCell}     & \texttt{MOB} & 46 \\
\bottomrule
\end{tabular}
\end{center}

Agent $u$ emits the message $\mathbf m_u=(70,85,55)$ over
$(\texttt{UP},\texttt{CP},\texttt{MOB})$. Agent $v$, after reading it, emits
$\mathbf m_v=(52,60,58)$, even though its own counters would have put the user
plane near $35$.
\end{example}

By~\eqref{eq:msg}, $\mathbf m_u$ restricts to the belief
$\mathbf x_u=(-,85,55)$, and $\mathbf m_v$ to $\mathbf x_v=(52,-,58)$. The
message $\mathbf m_u$ is a single point; \emph{the law} $p_u$ is the
statistical distribution this message follows across operating
conditions---it should be sharp on \texttt{CP} but diffuse on \texttt{UP}---and
\emph{the expectation law} $q_u$ is what the trust layer builds from the mask
alone.

Agent $u$'s \texttt{UP} score of $70$ is a confident assertion its mask cannot
support, and the trust layer rightfully penalizes it. Agent~$v$'s \texttt{UP}
score of $52$, by contrast, passes the audit because its mask covers
\texttt{UP}---yet that score was heavily influenced by reading $u$'s
hallucinated message. No local inspection against $q_v$ can detect this flaw:
the provenance of the error is hidden entirely in the statistical dependence
between the agents.

\subsection{The Trust Sheaf}\label{ssec:sheaf}

To manage these overlapping views mathematically, we model the network as a
trust sheaf. Extending the example, $u$ still tracks \texttt{CP} and
\texttt{MOB}, $v$ tracks \texttt{UP} and \texttt{MOB}, and a third agent $z$
tracks all three dimensions; in this configuration $u$ and $v$ are each linked
to $z$ but not to each other.

\emph{Stalks and shared vocabularies.} A sheaf $\Fcal$ assigns each agent a
\emph{vertex stalk} $\Fst{v}$, the space of its beliefs over $\mathcal C_v$,
and each link an \emph{edge stalk} $\Fst{e}$, the dimensions its two endpoints
share. Before communicating across a link, an agent applies a
\emph{restriction map}
\begin{equation}
\Fr{v}{e}:\;\Fst{v}\longrightarrow\Fst{e},\qquad v\inc e,
\label{eq:restr}
\end{equation}
which acts as a filter, dropping every dimension its neighbor does not track.

\emph{Defining consensus.} Consensus on a link $e=\{u,z\}$ means the agents'
filtered beliefs match: $\Fr{u}{e}\mathbf x_u=\Fr{z}{e}\mathbf x_z$. If $u$'s
belief is $(80,57)$ and $z$'s is $(44,80,57)$, they agree on their link:
$u$'s $(80,57)$ equals $z$'s restriction to the shared \texttt{CP} and
\texttt{MOB}, and $u$'s blindness to $z$'s \texttt{UP} score of $44$ does not
break the agreement.

\emph{Measuring disagreement.} When agents disagree, their current beliefs are
stacked into a global state vector, a \emph{$0$-cochain}
$\mathbf x\in\bigoplus_{v}\Fst{v}$. For the three agents, with $u$ and $v$
holding their beliefs from Example~\ref{ex:two}:
\begin{equation}
\mathbf x=\big(\underbrace{-,85,55}_{u}\;\big|\;\underbrace{52,-,58}_{v}\;\big|\;
\underbrace{44,80,57}_{z}\big).
\label{eq:cochain}
\end{equation}
The \emph{coboundary operator} $\bm\delta$ subtracts the restricted beliefs
across each link. Between $u$ and $z$ (sharing \texttt{CP}, \texttt{MOB}) the
disagreement is $(85-80,\,55-57)=(5,-2)$; between~$v$ and~$z$ (sharing
\texttt{UP}, \texttt{MOB}) it is $(52-44,\,58-57)=(8,1)$. The resulting
\emph{$1$-cochain} collects these residuals:
\begin{equation}
\bm\delta\mathbf x=(5,-2\mid 8,1).
\end{equation}

\emph{The Laplacian and relaxation.} The network penalizes these residuals in
proportion to the trust weight $w_e$ of each link, defined in
Section~\ref{ssec:score}. With $\mathbf W$ the diagonal matrix of link
weights, the \emph{sheaf Laplacian} $\mathbf L_{\Fcal}\triangleq\bm\delta^\top\mathbf W\bm\delta$
yields their weighted sum:
\begin{equation}
\mathbf x^\top\mathbf L_{\Fcal}\mathbf x=\sum_{e}w_{e}\|(\bm\delta\mathbf x)_e\|^2 .
\label{eq:lap}
\end{equation}
To resolve conflicts, the network undergoes \emph{relaxation},
$\dot{\mathbf x}=-\mathbf L_{\Fcal}\mathbf x$: a local gradient descent that
moves each agent's beliefs toward its trusted neighbors, and stops only when
every shared dimension agrees and $\bm\delta\mathbf x=0$.

\emph{Gluing.} Gluing is the property that lets local agreements cascade into
network-wide consensus. An assignment of beliefs is \emph{consistent} when all
differences across links vanish, $\bm\delta\mathbf x=0$; such a state is a
\emph{global section}, an element of $H^0(G;\Fcal)=\ker\bm\delta$. Anchoring
$z$'s beliefs, an audit of the two links to $z$ alone drives
$\mathbf x_u=(80,57)$ and $\mathbf x_v=(44,57)$, and so brings $u$ and $v$,
which share no link, to a common $\texttt{MOB}=57$. On a connected graph,
pairwise agreement stitches itself into global agreement, which is what
licenses a purely local audit. Because $\mathbf W$ is positive diagonal,
$\ker\mathbf L_{\Fcal}=\ker\bm\delta=H^0$ whatever the weights: trust sets how
fast the society reaches a global section, never which global sections exist.

If calibrations conflict around loops, consensus contracts to the consistent
subspace $H^0$, and relaxation projects the initial beliefs onto it with
disagreement decaying as $e^{-\lambda_2 t}$~\cite{ref:olfati}, where
$\lambda_2$ is the smallest nonzero eigenvalue of $\mathbf L_{\Fcal}$. The
time for disagreement to fall to a fraction $\varepsilon$ of its initial value
is the \emph{consistency time}
\begin{equation}
\tau(\Fcal)\triangleq\lambda_2(\mathbf L_{\Fcal})^{-1}\log(1/\varepsilon).
\label{eq:tau}
\end{equation}

\subsection{The Cognitive SNR}\label{ssec:score}

Receivers evaluate trust via a machine Theory of Mind, auditing whether a
sender's evidence supports its claims. Comparing a transmitted message against
an expectation derived from the sender's public mask yields a \emph{cognitive
signal-to-noise ratio (SNR)}~\cite{ref:tom}:
\begin{equation}
\begin{aligned}
N_{v\inc e}&\triangleq\KL{\Fr{v}{e}\,\mathbf x_v}{\Fr{v}{e}\,\hat{\mathbf x}_v},
&\gamma_{v\inc e}&\triangleq\frac{S_v}{N_{v\inc e}},\\[2pt]
w(\gamma)&\triangleq1-\exp({-\eta\gamma})\in[0,1),
\end{aligned}
\label{eq:csnr}
\end{equation}
where, as throughout, a divergence is taken between the belief distributions
whose means these vectors hold. The \emph{noise} $N_{v\inc e}$ penalizes
hallucinations by measuring how far the sender's restricted message departs
from the expectation its mask supports. The \emph{signal} $S_v$ is the
confidence the sender asserts, taken as in the companion
formulation~\cite{ref:tom} to be the trace of its belief's precision matrix:
\begin{equation}
S_v\triangleq\operatorname{Tr}\!\big(\Lambda_v\big)
=\sum_{c\in\mathcal C_v}\frac{1}{\sigma^2_{v,c}} .
\label{eq:signal}
\end{equation}
Since scores are pure numbers on the $0$--$100$ scale, $S_v$ is a
dimensionless precision, whereas $N_{v\inc e}$ is a divergence in bits; the
ratio $\gamma_{v\inc e}$ therefore carries units of inverse bits, reading as
asserted precision per bit of unexplained divergence. It is reported against
a declared reference $\gamma_0=1\,\mathrm{bit}^{-1}$, exactly as power is
reported in dBm against $1$\,mW, and written in $\dBcog$: a value of
$x\,\dBcog$ means $10\log_{10}(\gamma_{v\inc e}/\gamma_0)=x$. Nothing in the
three laws below turns on that choice: each is stated as a \emph{ratio} in
which $S_v$ cancels, so the reference cancels with it.

The ratio $\gamma_{v\inc e}$ acts as a native hallucination detector:
asserting high confidence without observational support spikes
$N_{v\inc e}$ and collapses the SNR. Finally, $\eta$ scales $\gamma$ into a
normalized weight, and reconciliation across a link is bottlenecked by its
least credible participant: $w_e\triangleq w(\gamma_e)$ with
$\gamma_e\triangleq\min_{v\inc e}\gamma_{v\inc e}$.

\subsection{Message Laws and the Two Divergences}\label{ssec:two}

Across operating conditions $\bm\theta$, \eqref{eq:msg} makes each
$\mathbf m_v$ a random vector. The ensemble
$\mathbf m_V\triangleq(\mathbf m_v)_{v\in V}$ follows a joint law $p_V$ with
marginals $p_v$, and its dependence has two sources: shared counters feed a
common $\bm\theta$ into several observations, and \eqref{eq:msg} feeds each
agent its neighbors' messages. The trust layer's expectation $q_v$ is the law
of the message a generic policy would emit from $o_v$ alone, with no neighbor
input. Agent $v$ contributes $N_v\triangleq\KL{p_v}{q_v}$, taken over the
whole vocabulary so that unbacked claims are charged, and the reported and
joint divergences are
\begin{equation}
\Nloc\triangleq\sum_{v\in V}N_v,
\qquad
\Njoint\triangleq\KL{p_V}{\textstyle\prod_{v\in V}q_v}.
\label{eq:nloc}
\end{equation}
Both measure cognitive noise in bits. In Example~\ref{ex:two}, $\Nloc$ charges
$u$ for its unbacked \texttt{UP} claim and clears $v$, whose \texttt{UP} score
is backed by its own counters; $\Njoint$ also charges for $v$'s \texttt{UP}
score moving with $u$'s, a dependence no marginal can represent.

\begin{assumption}[Per-agent expectation]\label{as:main}
The reference law against which the trust layer scores a relayed conclusion is
the product $\prod_{v\in V}q_v$, each factor built from its agent's own
observation alone.
\end{assumption}

\section{The Composition Law}\label{sec:identity}

Near-real-time control loops must evaluate messages link by link, because
joint audits over full network histories are computationally infeasible. Yet
as agents read peer messages, their conclusions become statistically
intertwined. Local audits measure individual deviations from expectation but
cannot detect collective reinforcement. The composition law formalizes this
blind spot: localized auditing underestimates the joint divergence by exactly
the hidden dependence among agents, and so inflates reported trust.

\begin{theorem}[Society divergence]\label{thm:main}
For any finite society and any interaction topology, the locally visible and
joint divergences of~\eqref{eq:nloc} differ by exactly the total correlation
$\Ph$ of the messages $\mathbf m_V$:
\begin{equation}
\Njoint=\Ph(\mathbf m_V)+\Nloc ,
\label{eq:main}
\end{equation}
where $\Ph(\mathbf m_V)\triangleq\KL{p_V}{\prod_{v\in V}p_v}
=\sum_{v\in V}h(\mathbf m_v)-h(\mathbf m_V)$ and $h(\cdot)$ is the
differential entropy.
\end{theorem}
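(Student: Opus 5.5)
The identity is the chain rule for relative entropy applied to the factorization $p_V/\prod_v q_v = (p_V/\prod_v p_v)\cdot\prod_v(p_v/q_v)$. I will work under Assumption~\ref{as:main}, so the reference law is exactly the product $\prod_{v\in V}q_v$. I also assume that each $p_v$ is absolutely continuous with respect to $q_v$ and that all densities exist, so every divergence below is a well-defined (possibly infinite) integral. Writing the joint density as $p_V(\mathbf m_V)$ and the marginals as $p_v(\mathbf m_v)$, I insert the product of marginals into the log-ratio defining $\Njoint$:
\begin{equation*}
\log\frac{p_V(\mathbf m_V)}{\prod_{v}q_v(\mathbf m_v)}
=\log\frac{p_V(\mathbf m_V)}{\prod_{v}p_v(\mathbf m_v)}
+\sum_{v\in V}\log\frac{p_v(\mathbf m_v)}{q_v(\mathbf m_v)} .
\end{equation*}
This is a pointwise algebraic identity, valid wherever $p_V>0$. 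On that set each $p_v>0$, because a marginal cannot vanish where the joint density is positive.

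Next I take the expectation under $p_V$. The first term integrates to $\KL{p_V}{\prod_v p_v}=\Ph(\mathbf m_V)$ by definition. For each summand in the second term, the integrand depends only on $\mathbf m_v$, so integrating out the other coordinates reduces the expectation under $p_V$ to one under the marginal $p_v$. That gives $\KL{p_v}{q_v}=N_v$, and summing over $v$ yields $\Nloc$ by~\eqref{eq:nloc}, hence~\eqref{eq:main}. The second expression for $\Ph$ follows by expanding $\E_{p_V}[\log p_V-\sum_v\log p_v]=-h(\mathbf m_V)+\sum_v h(\mathbf m_v)$, using the same marginalization step on each $\log p_v$.

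Nothing in this argument uses the graph $G$, the policies $\pi_v$, or the sheaf structure. The topology enters only through how dependent $p_V$ is, which is why the identity holds on any topology and for any per-agent expectations.

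The main obstacle is measure-theoretic rather than algebraic: splitting the expectation of a sum requires the terms not to produce $\infty-\infty$. I would handle this as follows.
\begin{itemize}
\item The total correlation $\Ph\ge0$ and each $N_v\ge0$, so the right-hand side is a sum of nonnegative quantities and is always well defined in $[0,\infty]$.
\item If $\Nloc<\infty$, the identity follows from the above argument together with the standard chain rule for divergence (e.g.\ via Gibbs' inequality and monotone convergence, or via the general chain rule in~\cite{ref:cover}).
\item If some $N_v=\infty$, then $\Njoint=\infty$ by the data-processing inequality applied to the marginalization $\mathbf m_V\mapsto\mathbf m_v$, so both sides equal $\infty$.
\end{itemize}
The entropy form of $\Ph$ additionally needs the differential entropies to be finite. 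I would state that as a side condition, which holds for the bounded $[0,100]$ scores with densities.
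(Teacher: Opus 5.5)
Your argument is correct and is the paper's proof: insert $\prod_v p_v$ into the log-ratio defining $\Njoint$, identify the first expectation as $\Ph(\mathbf m_V)$, and marginalize each $\log(p_v/q_v)$ term to obtain $\Nloc$; your handling of the infinite cases (nonnegativity of both sides, and data processing when some $N_v=\infty$) adds rigor that the paper skips, since it simply assumes $p_V\ll\prod_v q_v$ and all terms finite. One small correction to your side remark: bounded support in $[0,100]^{\mathcal C}$ bounds differential entropy only from above (it can still equal $-\infty$), so finiteness of the $h(\cdot)$ terms in the entropy form of $\Ph$ has to be assumed rather than deduced.
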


\begin{IEEEproof}
Assume $p_V\ll\prod_v q_v$ and all terms finite. This holds in practice
because each $q_v$, being the law of a generic policy's output, assigns
positive density to every score on every dimension---on dimensions outside the
mask it reduces to an uninformative prior---so no feasible message is declared
impossible. Inserting $\log\prod_v p_v$,
\begin{align}
\Njoint&\triangleq\E_p\!\left[\log\frac{p_{V}}{\prod_v q_v}\right] \nonumber\\
&=\underbrace{\E_p\!\left[\log\frac{p_{V}}{\prod_v p_v}\right]}_{(a)}
 \!+\!\underbrace{\E_p\!\left[\log\frac{\prod_v p_v}{\prod_v q_v}\right]}_{(b)}\!.
\label{eq:three}
\end{align}
Term $(a)$ is $\KL{p_V}{\prod_{v\in V}p_v}=\Ph(\mathbf m_V)$. By linearity of
expectation, term $(b)$ splits into
$\sum_{v\in V}\E_p[\log p_v(\mathbf m_v)/q_v(\mathbf m_v)]=\Nloc$.
\end{IEEEproof}

The proof uses only Assumption~\ref{as:main} and linearity of
expectation---no ordering, no Markov property---so the topology shapes $p_V$
without entering the statement, and independent messages give $\Ph=0$. In
Example~\ref{ex:two} the shortfall is $\Ph(\mathbf m_{\{u,v\}})=I(\mathbf m_u;\mathbf m_v)$,
which reflects both the shared \texttt{MOB} counter and $v$'s persuasion.
Three readings follow at once. Since $\Ph(\mathbf m_V)\ge0$ by Gibbs'
inequality, $\Njoint\ge\Nloc$, with equality iff the messages are mutually
independent, so a per-link audit is optimistic and never pessimistic. The
shortfall does not depend on the per-agent expectations: term $(a)$ involves
only the $p_v$, never the $q_v$, so $\Ph(\mathbf m_V)$ survives any choice of
$q_v$, and no amount of care in modeling the senders recovers it. And because
$\gamma$ decreases with divergence while $w$ increases with $\gamma$, the gap
reaches the reported trust weights intact, rescaling the society's confidence
by a single factor.

\begin{corollary}[Uniform SNR inflation]\label{cor:beta}
Let $S\triangleq\sum_{v\in V}S_v$. The reported and joint-consistent cognitive
SNRs $\gamma^{\rep}=S/\Nloc$ and $\gamma^{\tru}=S/\Njoint$ satisfy
\begin{equation}
\frac{\gamma^{\rep}}{\gamma^{\tru}}
=\frac{\Njoint}{\Nloc}
=1+\frac{\Ph}{\Nloc}\;\triangleq\;\beta\;\ge\;1 .
\label{eq:beta}
\end{equation}
\end{corollary}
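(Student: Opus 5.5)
The corollary follows almost immediately from Theorem~\ref{thm:main}, so the plan is a short algebraic derivation. First, from the definitions $\gamma^{\rep}=S/\Nloc$ and $\gamma^{\tru}=S/\Njoint$, I form the ratio. The common signal $S=\sum_{v}S_v$ cancels, leaving
\[
\frac{\gamma^{\rep}}{\gamma^{\tru}}=\frac{\Njoint}{\Nloc}.
\]
This requires $S$ to be finite and strictly positive, and $\Nloc,\Njoint$ to be finite and positive. I would state these standing conditions explicitly:
\begin{itemize}
\item $S>0$ because every $\sigma^2_{v,c}$ is finite;
\item $\Nloc>0$ whenever some agent deviates from its expectation;
\item finiteness is the same absolute-continuity assumption used in the proof of Theorem~\ref{thm:main}.
\end{itemize}
If $\Nloc=0$, the reported SNR is infinite and the ratio is not defined. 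I would exclude that degenerate case, or read $\beta=\infty$ when $\Ph>0$.

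Second, I substitute the composition law~\eqref{eq:main}, $\Njoint=\Ph(\mathbf m_V)+\Nloc$, into the numerator and divide through by $\Nloc$. This gives $\beta=1+\Ph/\Nloc$, which is exactly the displayed identity.

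Third, the bound $\beta\ge1$ comes from $\Ph(\mathbf m_V)=\KL{p_V}{\prod_v p_v}\ge0$ by Gibbs' inequality, together with $\Nloc>0$. Equality holds iff the messages are mutually independent. I would also note that $\beta$ depends only on $\Ph$ and $\Nloc$, not on the $S_v$. This justifies calling the inflation topological: the declared precision cancels, consistent with the remark after~\eqref{eq:signal} that the $\dBcog$ reference cancels too.

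The only step needing care is the first, namely making the well-posedness conditions explicit so that the cancellation and the division by $\Nloc$ are legitimate. Everything else is direct substitution from Theorem~\ref{thm:main}.
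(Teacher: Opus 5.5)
Your proposal is correct and follows the paper's proof. It cancels the common signal $S$, divides the composition law~\eqref{eq:main} by $\Nloc$, and obtains $\beta\ge1$ from $\Ph\ge0$. The paper also explains why the joint evaluation carries the same $S$: the society's asserted precision is block-diagonal, so its trace is $\sum_v S_v$. You take that from the statement's definitions instead, and you add the well-posedness conditions ($0<S<\infty$, $0<\Nloc<\infty$) that the paper leaves implicit.
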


\begin{IEEEproof}
By~\eqref{eq:signal}, $S_v$ depends only on what $v$ transmitted, not on any
reference law, so it is objective and receiver-computable yet intrinsic to the
sender. The local evaluation sums these per-agent signals. The joint
evaluation reads the society's asserted precision, which is block-diagonal
because no agent transmits cross-covariances; its trace is therefore the same
$\sum_{v}S_v=S$. The two evaluations thus share the confidence $S$ and differ
only in their reference law, so $S$ cancels and~\eqref{eq:beta} is
Theorem~\ref{thm:main} divided by $\Nloc$. Since $\Ph(\mathbf m_V)\ge0$,
$\beta\ge1$.
\end{IEEEproof}

This is what lets the shortfall be read as a power ratio without any
commitment about how confidence is measured. Both evaluations share the same
declared signal $S$, which cancels identically, so $\beta$ is a dimensionless
factor by which the audit over-reports the society's cognitive SNR. Its
decibel reading, $10\log_{10}\beta$, is therefore an unqualified decibel,
produced by the topology and the message correlations rather than by the
agents' calibration, and no recalibration can reduce it.

\section{Scaling and Consistency Time}\label{sec:scaling}

When a network scales, smaller local societies are glued into larger ones, as
described in Section~\ref{ssec:sheaf}. Gluing preserves locally certified
states, and the consensus space $H^0=\ker\bm\delta$ does not depend on the
weights; so neither growth nor the trust inflation of
Corollary~\ref{cor:beta} alters which agreements are valid. What both change
is the \emph{rate} of convergence---\emph{how fast} agents agree, not
\emph{whether} they agree. To understand this speed limit, consider how the
structure tracks variables. Because links compare only shared
counters---comparing \texttt{CP} while ignoring \texttt{UP}, for
instance---dimensions never mix, and the Laplacian decouples into independent
reconciliation processes, one per dimension.

\begin{lemma}[The society splits by dimension]\label{lem:gap}
Let every restriction map be a coordinate projection onto the dimensions its
link shares, and let $G_c$, the subgraph sharing dimension $c$, be connected.
The Laplacian decomposes as $\mathbf L_{\Fcal}=\bigoplus_{c}\mathbf L_{G_c}^{w}$,
where $\mathbf L_{G_c}^{w}$ is the graph Laplacian of $G_c$ carrying the trust
weights. The algebraic connectivity and the consistency time are therefore set
entirely by the slowest dimension:
\begin{equation}
\lambda_2(\mathbf L_{\Fcal})=\min_{c}\lambda_2(\mathbf L_{G_c}^{w}),
\qquad
\tau(\Fcal)=\max_{c}\tau(G_c).
\label{eq:gaplemma}
\end{equation}
\end{lemma}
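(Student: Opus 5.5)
The plan is to reorder coordinates so that the sheaf coboundary becomes block diagonal, one block per dimension, and then read off the spectrum block by block.

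\textbf{Step 1: regroup the cochain spaces by dimension.} For each $c\in\mathcal C$, let $V_c=\{v: c\in\mathcal C_v\}$ be the agents covering $c$. Let $E_c$ be the links whose edge stalk contains $c$, so that $G_c=(V_c,E_c)$. Since every vertex stalk is $\mathbb R^{\mathcal C_v}$, a coordinate permutation gives
\[
\bigoplus_v\Fst{v}\cong\bigoplus_c\mathbb R^{V_c}.
\]
Likewise, since every edge stalk is $\mathbb R^{\mathcal C_u\cap\mathcal C_z}$ for $e=\{u,z\}$, we get
\[
\bigoplus_e\Fst{e}\cong\bigoplus_c\mathbb R^{E_c}.
\]
I will fix an orientation of each edge once and use it for all of its coordinates.

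\textbf{Step 2: the coboundary and weights are block diagonal.} Because each $\Fr{v}{e}$ is a coordinate projection, the $c$-component of $(\bm\delta\mathbf x)_e$ is $x_{u,c}-x_{z,c}$. This involves only $c$-coordinates of vertices, as in the worked residuals $(5,-2\mid 8,1)$ of Section~\ref{ssec:sheaf}. Hence, in the regrouped bases, $\bm\delta=\bigoplus_c\mathbf B_c$, where $\mathbf B_c$ is the signed incidence matrix of $G_c$.

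The weight matrix $\mathbf W$ multiplies every coordinate of $\Fst{e}$ by the same scalar $w_e$. So $\mathbf W=\bigoplus_c\mathbf W_c$ with $\mathbf W_c=\mathrm{diag}(w_e)_{e\in E_c}$. Therefore
\[
\mathbf L_{\Fcal}=\bm\delta^\top\mathbf W\bm\delta=\bigoplus_c\mathbf B_c^\top\mathbf W_c\mathbf B_c=\bigoplus_c\mathbf L^{w}_{G_c},
\]
and $\mathbf L^{w}_{G_c}$ is exactly the weighted graph Laplacian of $G_c$. The same conclusion follows from the quadratic form~\eqref{eq:lap}: $\sum_e w_e\|(\bm\delta\mathbf x)_e\|^2$ splits as $\sum_c\sum_{e\in E_c}w_e(x_{u,c}-x_{z,c})^2$.

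\textbf{Step 3: spectrum of the direct sum.} The spectrum of a block-diagonal matrix is the multiset union of the block spectra. Each $\mathbf L^w_{G_c}$ is positive semidefinite. Because $G_c$ is connected and its weights are positive, its kernel is exactly the constants, so $0$ occurs in block $c$ with multiplicity one. Its smallest nonzero eigenvalue is then $\lambda_2(\mathbf L^w_{G_c})$.

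The zero eigenspace of $\mathbf L_{\Fcal}$ is $\bigoplus_c\mathrm{span}(\mathbf 1_{V_c})=H^0$, which is consistent with $\ker\mathbf L_{\Fcal}=\ker\bm\delta$. Its smallest nonzero eigenvalue is the minimum over blocks of each block's smallest nonzero eigenvalue, namely $\min_c\lambda_2(\mathbf L^w_{G_c})$. Substituting into~\eqref{eq:tau}, the map $\lambda\mapsto\lambda^{-1}\log(1/\varepsilon)$ is decreasing, which turns the minimum into $\tau(\Fcal)=\max_c\tau(G_c)$.

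\textbf{Expected obstacle.} The algebra is routine. The delicate part is the hypotheses needed for Step 3, and I would state them explicitly.
\begin{itemize}
\item Positivity $w_e>0$ is required, since $w(\gamma)=0$ when $\gamma_e=0$. I will assume all links have positive credibility, or else delete zero-weight links before forming $G_c$.
\item Each $G_c$ must have at least two vertices, so that $\lambda_2$ is defined. A dimension held by a single agent contributes only zero eigenvalues and no constraint, so it should be excluded from the minimum.
\item The per-edge weight must be a single scalar shared by all coordinates of $\Fst{e}$, as defined in Section~\ref{ssec:score}. Per-dimension weights would still decouple, but with different $\mathbf W_c$.
\end{itemize}
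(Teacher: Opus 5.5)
Your proposal is correct and follows the same route as the paper. Because the restriction maps are coordinate projections, there are no cross-dimension terms, so $\mathbf L_{\Fcal}$ is block diagonal with the weighted Laplacians of the $G_c$, and the spectrum of the direct sum is the union of the block spectra. You simply make explicit what the paper's two-line proof leaves implicit: the incidence-matrix form of each block, the single scalar weight per link, $w_e>0$, and the exclusion of dimensions held by one agent, which the paper covers by noting that unshared dimensions contribute only to the kernel.
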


\begin{IEEEproof}
The Laplacian~\eqref{eq:lap} computes differences across links with no
cross-terms between dimensions. The spectrum of the direct sum is the union of
the individual spectra, with unshared dimensions contributing only to the
kernel.
\end{IEEEproof}

This creates a counter-intuitive dynamic: widely shared counters demand global
consensus and become bottlenecks, while narrowly shared dimensions reconcile
quickly across local links. Because Lemma~\ref{lem:gap} isolates each
dimension into an ordinary graph $G_c$, classical graph theory dictates how
topology governs the worst-case consistency time. Below, $B_r(v)$ denotes the
radius-$r$ ball around $v$, the agents within $r$ hops of it, and a graph has
polynomial growth $d$ when its ball sizes scale as $r^d$.

\begin{proposition}[Consistency time vs.\ topology]\label{prop:scaling}
Let the consistency time~\eqref{eq:tau} be evaluated for link weights
$w\ge w_{\min}>0$, on any sheaf $\Fcal'$ over a subgraph of $G$. Topology
dictates convergence:
\emph{(i) Global scaling:} On graphs with polynomial growth $d$ (e.g., grids)
and $n$ agents, consistency slows as $\Omega(n^{2/d})\log(1/\varepsilon)$.
Expander graphs, conversely, maintain a constant time
$\Theta(1)\log(1/\varepsilon)$~\cite[Ch.~1--2]{ref:chung}.
\emph{(ii) Local scaling:} Consistency within any ball $B_r(v)$ depends only on
local geometry, independent of $n$:
\begin{equation}
\tau\big(\Fcal|_{B_r(v)}\big)\;\le\;
\frac{|B_r(v)|\,\diam(B_r(v))}{4\,w_{\min}}
\,\log\frac{1}{\varepsilon}.
\label{eq:scalinglocal}
\end{equation}
\emph{(iii) Universal bound:} If the maximum degree $\Delta$ is bounded, every
radius-$r$ ball reconciles in $O(1)$ time with respect to $n$, whatever the
global topology.
\end{proposition}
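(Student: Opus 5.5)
The plan is to reduce everything to ordinary weighted graph Laplacians via Lemma~\ref{lem:gap} and then apply classical spectral bounds to the slowest dimension. For any sheaf $\Fcal'$ over a subgraph with coordinate-projection restrictions, Lemma~\ref{lem:gap} gives $\tau(\Fcal')=\max_c\tau(G_c)$. Since $\tau(G_c)=\lambda_2(\mathbf L^w_{G_c})^{-1}\log(1/\varepsilon)$, it suffices to bound $\lambda_2$ of a connected weighted graph whose weights satisfy $w_{\min}\le w_e<1$ (recall $w(\gamma)<1$ from~\eqref{eq:csnr}). The constant factor can be traded against the unweighted Laplacian by monotonicity of the Rayleigh quotient: $w_{\min}\lambda_2(L_{G_c})\le\lambda_2(\mathbf L^w_{G_c})\le\lambda_2(L_{G_c})$.

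\emph{Part (i).} For the lower bound on polynomial-growth graphs, I will use the Rayleigh quotient $\lambda_2\le\sum_e w_e(f_a-f_b)^2/\sum_v f_v^2$ with $\sum_v f_v=0$. The test function is $f_v=\mathrm{dist}(v,v_0)-\bar d$, where $v_0$ is an endpoint of a diametral pair and $\bar d$ is the mean distance. Each edge contributes at most $w_e\le 1$, so the numerator is at most $|E|\le\Delta n$. Polynomial growth forces $\diam\gtrsim n^{1/d}$, and a constant fraction of vertices then lie at distance $\Omega(n^{1/d})$ from $\bar d$. This makes the denominator $\Omega(n\cdot n^{2/d})$, so $\lambda_2=O(n^{-2/d})$ and $\tau=\Omega(n^{2/d})\log(1/\varepsilon)$. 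This step needs the bottleneck dimension's $G_c$ to inherit the growth condition; I will state it for a dimension shared network-wide, as the Lemma's discussion suggests.

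For expanders I will invoke Cheeger's inequality~\cite{ref:chung}, $\lambda_2\ge h^2/(2\Delta)$ for the unweighted Laplacian, combined with the factor $w_{\min}$ above. Together with $\lambda_2\le\Delta$ (weights below $1$), this gives $\Theta(1)$.

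\emph{Part (ii).} This is the main step. I will prove the Mohar-type bound $\lambda_2(L)\ge 4/(N\,\diam)$ for a connected unweighted graph on $N$ vertices~\cite{ref:mohar}. Take the unit eigenvector $f\perp\mathbf 1$. Pick $a,b$ with $f_a-f_b$ maximal, so $f_a-f_b\ge$ a quantity controlled by $\sum f_v^2=1$; the zero-mean condition gives $\max f-\min f\ge 1/\sqrt N$ after a short argument. Along a shortest path $P$ of length $\ell\le\diam$, Cauchy--Schwarz gives $\sum_{e\in P}(f_a-f_b)^2\ge(f_a-f_b)^2/\ell$. Getting exactly the constant $4$ needs Mohar's sharper counting rather than the crude route. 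If that fails to produce $4$, I will settle for the crude constant and note that the form of~\eqref{eq:scalinglocal} is what matters. Applying this with $N=|B_r(v)|$ and $\diam(B_r(v))$, then multiplying by $w_{\min}$ and taking the max over dimensions via Lemma~\ref{lem:gap}, yields~\eqref{eq:scalinglocal}. This works because restricting $\Fcal$ to the ball is again a projection sheaf over a subgraph. One caveat is that the per-dimension subgraph $G_c\cap B_r(v)$ must be connected with diameter at most that of the ball; otherwise I will have to define $\diam$ per dimension or assume it.

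\emph{Part (iii).} With maximum degree $\Delta$, $|B_r(v)|\le 1+\Delta\sum_{k<r}(\Delta-1)^k$ and $\diam(B_r(v))\le 2r$. Both are independent of $n$, so~\eqref{eq:scalinglocal} gives $O(1)$ in $n$. I expect the main obstacle to be the lower bound in (i): it requires a precise growth hypothesis, namely upper and lower ball-size bounds of order $r^d$, to turn the diameter estimate into the variance estimate of the test function.
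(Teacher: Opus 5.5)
Your plan follows the paper's proof step for step. Lemma~\ref{lem:gap} reduces everything to the weighted graphs $G_c$. Part (i) uses a small-gap estimate under polynomial growth and a constant gap for expanders; the paper only cites ``standard spectral theory'' and \cite{ref:chung}. Part (ii) is Mohar's bound with the factor $w_{\min}$, and part (iii) is the Moore bound together with $\diam(B_r(v))\le 2r$. The hypotheses you surface in (i) are implicit in the paper as well: two-sided growth bounds, a bottleneck dimension with $G_c=G$, and every $G_c$ expanding for the expander upper bound. With $c\,r^d\le|B_r(u)|\le C\,r^d$ your variance step does close. If all but a fraction $\delta<c/(c+C)$ of the vertices had $|\mathrm{dist}(\cdot,v_0)-\bar d|\le a$, the two ball bounds at radii $\bar d\mp a$ would force $a\gtrsim n^{1/d}$. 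Chebyshev then gives a variance $\gtrsim n^{2/d}$.

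You do not need any sharper counting for the constant $4$; your own route gives it once the range estimate is done sharply. Take the unit Fiedler vector $f\perp\mathbf 1$ with $f_a=\max f$ and $f_b=\min f$. Summing $(f_a-f_u)(f_u-f_b)\ge0$ over all $u$ and using $\sum_u f_u=0$ gives $1=\sum_u f_u^2\le -N f_a f_b\le N(f_a-f_b)^2/4$. Hence $f_a-f_b\ge 2/\sqrt N$, not merely $1/\sqrt N$. Your Cauchy--Schwarz step along a shortest $a$--$b$ path then yields exactly $\lambda_2\ge 4/(N\,\diam)$. Falling back on the crude constant would not prove~\eqref{eq:scalinglocal} as stated.

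Your caveat in (ii) is the real issue, and it is a flaw in the paper's proof rather than yours. The paper simply asserts that $G_c$ restricted to $B_r(v)$ has diameter at most $\diam(B_r(v))$. That is false: $G_c\cap B_r(v)$ keeps only the ball's links whose two endpoints both hold $c$, so it can be disconnected or far longer than the ball. For a counterexample, join $v$ to every vertex of a path $p_1\cdots p_m$, let $\mathcal C_v=\{c'\}$ and $\mathcal C_{p_i}=\{c,c'\}$, and set every weight to $w_{\min}$. Then $|B_1(v)|=m+1$ and $\diam(B_1(v))=2$. The fan $G_{c'}$ contains the star at $v$, so $\lambda_2(\mathbf L^w_{G_{c'}})\ge w_{\min}$, while $G_c$ is the path. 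Therefore $\tau=\log(1/\varepsilon)/\big(2w_{\min}(1-\cos(\pi/m))\big)\ge m^2\log(1/\varepsilon)/(\pi^2 w_{\min})$. This exceeds the right side $(m+1)\log(1/\varepsilon)/(2w_{\min})$ of~\eqref{eq:scalinglocal} for every $m\ge 6$. So (ii) holds only with the diameter taken per dimension (and per component), as you proposed. For (iii) you must then drop $\diam(B_r(v))\le 2r$. Instead, bound each component of $G_c\cap B_r(v)$ by $|B_r(v)|$ vertices and diameter $|B_r(v)|-1$. This gives $\tau\le |B_r(v)|^2\log(1/\varepsilon)/(4w_{\min})$, which is still $O(1)$ in $n$ by the Moore bound.
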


\begin{IEEEproof}
By Lemma~\ref{lem:gap}, it suffices to analyze each $G_c$, the slowest of
which sets the overall speed.
\emph{(i)} On polynomial-growth graphs, standard spectral theory gives
$\lambda_2=O(n^{-2/d})$, which through~\eqref{eq:gaplemma} yields the
$\Omega(n^{2/d})$ bound. Expander graphs preserve a constant gap
$\Theta(1)$~\cite[Ch.~1--2]{ref:chung}, so the society is fast if every
dimension expands.
\emph{(ii)} Restricting each $G_c$ to $B_r(v)$ gives a subgraph $G'$ with at
most $|B_r(v)|$ vertices and diameter at most $\diam(B_r(v))$. Mohar's lower
bound on algebraic connectivity~\cite{ref:mohar} states that for any graph
$G'=(V',E')$ with minimum edge weight $w_{\min}$,
$\lambda_2\ge 4w_{\min}/(|V'|\,\diam(G'))$. Inverting $\lambda_2$
yields~\eqref{eq:scalinglocal}, which depends only on $r$ and the maximum
degree.
\emph{(iii)} With maximum degree $\Delta$, the Moore bound gives
$|B_r(v)|\le 1+\Delta\sum_{k<r}(\Delta-1)^k$~\cite{Miller_Siran_2005}, and
$\diam(B_r(v))\le 2r$, so~\eqref{eq:scalinglocal} is $O(1)$ in $n$.
\end{IEEEproof}

Table~\ref{tab:scaling} summarizes this architectural reality: standard
physical topologies such as paths and grids cannot achieve global consensus at
scale, and only expanders or dense networks maintain a bounded global
consistency time. Otherwise, architects must rely on~\eqref{eq:scalinglocal},
ensuring that well-connected local neighborhoods reconcile quickly,
independent of the broader network.

\begin{table}[tb]
\centering\footnotesize
\caption{Consistency time versus topology as $n$ grows. For these named
families the order is tight, so the $\Omega(n^{2/d})$ of
Proposition~\ref{prop:scaling} sharpens to $\Theta(\cdot)$. The last column
is the exponent of $n$ fitted to the measured consistency time.}
\label{tab:scaling}
\begin{tabular}{@{}llll r@{}}
\toprule
Topology & $d$ & $\lambda_2(\mathbf L_G)$ & $\tau(\Fcal)$ & Measured \\
\midrule
Path, Ring & $1$      & $\Theta(n^{-2})$ & $\Theta(n^{2})$   & $2.00$, $1.99$ \\
Grid       & $2$      & $\Theta(n^{-1})$ & $\Theta(n)$       & $0.98$ \\
Expander   & $\infty$ & $\Theta(1)$      & $\Theta(1)$       & $-0.12$ \\
Complete   & ---      & $\Theta(n)$      & $\Theta(n^{-1})$  & $-1.03$ \\
\bottomrule
\end{tabular}
\end{table}

While Proposition~\ref{prop:scaling} maps theoretical speeds, real-world
auditing is imperfect. Applying Theorem~\ref{thm:main} to the sub-society each
link's audit sees gives an inflation $\beta_e\ge1$ per link $e$
(Corollary~\ref{cor:beta}), so the network is driven by distorted weights. The
$\beta_e$ need not be equal across links, and Theorem~\ref{thm:bias} uses only
$\beta_e\ge1$, not their values. This creates a discrepancy between the
\emph{reported} consistency time $\tau^{\rep}$, computed from the inflated
weights, and the \emph{true} consistency time $\tau^{\tru}$, computed from the
joint-consistent ones. Though the per-link audit cannot see this inflation,
Theorem~\ref{thm:bias} proves the error is bounded, one-sided, and set by the
worst-audited link.

\begin{theorem}[Consistency-time bias under inflated trust]\label{thm:bias}
For any sheaf $\Fcal'$ and inflations $\beta_e\ge1$, the ratio of the reported
to the true consistency time is bounded by
\begin{equation}
\Big(\max_{e}\beta_e\Big)^{-1}\;\le\;
\frac{\tau^{\rep}(\Fcal')}{\tau^{\tru}(\Fcal')}\;\le\;1 .
\label{eq:bias}
\end{equation}
\end{theorem}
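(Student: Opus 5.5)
The plan is to turn the per-link inflations $\beta_e\ge1$ into an entrywise sandwich on the trust weights. That sandwich lifts to a Loewner-order sandwich on the sheaf Laplacians, which then passes to $\lambda_2$ and, by inversion, to $\tau$ via~\eqref{eq:tau}. Throughout, $\Fcal'$ is fixed, so the coboundary $\bm\delta$ is the same in both evaluations. Only the diagonal weight matrices differ: $\mathbf W^{\rep}$ with entries $w(\gamma^{\rep}_e)$ and $\mathbf W^{\tru}$ with entries $w(\gamma^{\tru}_e)$. By Corollary~\ref{cor:beta} applied link by link, these satisfy $\gamma^{\rep}_e=\beta_e\gamma^{\tru}_e$.

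\emph{Step 1 (weights).} I will show that for every link
\begin{equation*}
w(\gamma^{\tru}_e)\;\le\;w(\gamma^{\rep}_e)\;\le\;\beta_e\,w(\gamma^{\tru}_e).
\end{equation*}
The left inequality follows because $w(\gamma)=1-e^{-\eta\gamma}$ is increasing and $\beta_e\ge1$. The right inequality uses that $w$ is concave with $w(0)=0$, so $w(\gamma)/\gamma$ is nonincreasing. Hence
\begin{equation*}
\frac{w(\beta_e\gamma)}{\beta_e\gamma}\le\frac{w(\gamma)}{\gamma},
\end{equation*}
which gives $w(\beta_e\gamma)\le\beta_e w(\gamma)$. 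Bounding $\beta_e$ by $\beta_{\max}\triangleq\max_e\beta_e$ then yields $\mathbf W^{\tru}\preceq\mathbf W^{\rep}\preceq\beta_{\max}\mathbf W^{\tru}$ entrywise. I expect this to be the step requiring the most care, because the bound must go through the nonlinear map $w$ rather than through $\gamma$ directly. Concavity is exactly what prevents the weight inflation from exceeding the SNR inflation.

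\emph{Step 2 (Laplacians).} Conjugating by $\bm\delta$ preserves the order: $\mathbf x^\top\mathbf L_{\Fcal'}\mathbf x=\sum_e w_e\|(\bm\delta\mathbf x)_e\|^2$ by~\eqref{eq:lap}. This gives $\mathbf L^{\tru}\preceq\mathbf L^{\rep}\preceq\beta_{\max}\mathbf L^{\tru}$ in the Loewner order.

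\emph{Step 3 (spectral gap).} Since all weights are positive, both Laplacians have the same kernel $\ker\bm\delta=H^0$, as noted in Section~\ref{ssec:sheaf}. Therefore $\lambda_2$ of each equals the minimum of the Rayleigh quotient over the same subspace $(H^0)^\perp$. Taking minima of the sandwiched quadratic forms over that common subspace gives $\lambda_2^{\tru}\le\lambda_2^{\rep}\le\beta_{\max}\lambda_2^{\tru}$. The shared kernel is essential: without it the two gaps would be minima over different subspaces and could not be compared.

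\emph{Step 4 (conclusion).} The factor $\log(1/\varepsilon)$ in~\eqref{eq:tau} is common to both times and cancels in the ratio. Inverting Step 3 then gives $\beta_{\max}^{-1}\le\tau^{\rep}/\tau^{\tru}\le1$, which is~\eqref{eq:bias}.
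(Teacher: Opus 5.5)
Your proof is correct and follows the paper's own argument. Concavity and monotonicity of $w$ with $w(0)=0$ give the weight sandwich $w(\gamma_e)\le w(\beta_e\gamma_e)\le(\max_{e'}\beta_{e'})\,w(\gamma_e)$, which is transferred to the Laplacians, then to $\lambda_2$, and inverted. Your Steps 2--3 spell out what the paper compresses into ``this ordering transfers to the Laplacians and hence to $\lambda_2$,'' in particular the shared kernel $\ker\bm\delta=H^0$ that lets you compare smallest nonzero eigenvalues through the Rayleigh quotient on $(H^0)^\perp$.
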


\begin{IEEEproof}
Because $w(\gamma)$ is concave and increasing with $w(0)=0$, the distorted
weights $w(\beta_e\gamma_e)$ are sandwiched as
$w(\gamma_e)\le w(\beta_e\gamma_e)\le(\max_{e'}\beta_{e'})\,w(\gamma_e)$. This
ordering transfers to the Laplacians and hence to $\lambda_2$, and inverting
yields~\eqref{eq:bias}.
\end{IEEEproof}

This theorem provides a predictable fail-safe. Because $\beta_e\ge1$, the
reported time $\tau^{\rep}$ is optimistic, but the error is tightly capped.
\emph{(i) Lower bound:} the network never synchronizes faster than reported,
$\tau^{\rep}\le\tau^{\tru}$. \emph{(ii) Capped delay:} errors do not compound;
the true delay is capped by the single worst-measured link,
$\tau^{\tru}\le\tau^{\rep}\times\max_{e}\beta_e$. \emph{(iii) Guaranteed
budgets:} multiplying the reported time by the maximum known inflation yields
a safe upper bound, so if that bound meets the real-time requirement, the true
system does too. This fail-safe complements the local scaling law:

\begin{corollary}[Locality improves the consistency time]\label{cor:local}
Restricting the check to a ball $B_r(v)$ improves the error bound to
$(\max_{e\subseteq B_r(v)}\beta_e)^{-1}$, rendering the fidelity of the local
consistency time immune to poorly audited links outside the ball.
\end{corollary}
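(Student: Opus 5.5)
The plan is to apply Theorem~\ref{thm:bias} to the restricted sheaf $\Fcal'=\Fcal|_{B_r(v)}$ instead of the whole society. Proposition~\ref{prop:scaling} and Theorem~\ref{thm:bias} are both stated for any sheaf $\Fcal'$ over a subgraph of $G$, so this is a legitimate instance. First I fix what the restriction means. Its vertex set is $B_r(v)$ and its edge set is $\{e\in E: e\subseteq B_r(v)\}$. Stalks and restriction maps are inherited unchanged. Its Laplacian is $\mathbf L_{\Fcal|_{B_r(v)}}=\bm\delta_B^\top\mathbf W_B\bm\delta_B$, where $\bm\delta_B$ and $\mathbf W_B$ keep only the rows and diagonal entries indexed by links inside the ball. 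Both the reported and true consistency times of the local check are computed from this matrix: $\tau^{\rep}$ uses weights $w(\beta_e\gamma_e)$ and $\tau^{\tru}$ uses $w(\gamma_e)$, each via~\eqref{eq:tau}.

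The key step is to rerun the sandwich argument of Theorem~\ref{thm:bias} on the restricted edge set. For every $e\subseteq B_r(v)$, concavity of $w$ with $w(0)=0$ gives $w(\gamma_e)\le w(\beta_e\gamma_e)\le \beta_e w(\gamma_e)\le \beta_B\, w(\gamma_e)$, where $\beta_B\triangleq\max_{e\subseteq B_r(v)}\beta_e$. The maximum now runs only over the links that appear in $\mathbf W_B$. Through the quadratic form~\eqref{eq:lap}, this gives the Loewner ordering $\mathbf L^{\tru}_B\preceq\mathbf L^{\rep}_B\preceq\beta_B\mathbf L^{\tru}_B$. Both matrices share the kernel $\ker\bm\delta_B$, so by Courant--Fischer on its orthogonal complement we get $\lambda_2^{\tru}\le\lambda_2^{\rep}\le\beta_B\lambda_2^{\tru}$. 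Inverting yields $\beta_B^{-1}\le\tau^{\rep}/\tau^{\tru}\le1$ for the ball.

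To get the immunity claim, I observe that links outside the ball contribute no row to $\bm\delta_B$. Their weights, and hence their $\beta_e$, therefore never enter $\mathbf L_B$. Arbitrarily large inflation on an external link leaves the bound unchanged. Since the maximum over a subset is at most the maximum over all links, $\beta_B\le\max_e\beta_e$, so the local bound is never worse than the global one of Theorem~\ref{thm:bias}.

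The main obstacle is the status of the $\beta_e$ themselves. Each $\beta_e$ comes from applying Theorem~\ref{thm:main} to the sub-society that link's audit sees. Restricting to the ball must not redefine them, for instance by recomputing total correlations on $B_r(v)$ only. I would argue that $\beta_e$ is a property of the audit on link $e$, fixed before any choice of ball, so it is inherited verbatim. Only the set of links over which the maximum is taken changes. A second point needs care: Courant--Fischer must be applied to the second eigenvalue above a common kernel. This requires both Laplacians to have the same kernel, which holds because both weight matrices are positive diagonal on the same edge set.
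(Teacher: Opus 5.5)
Your proposal is correct and follows the paper's approach: the paper gives no separate proof, treating the corollary as Theorem~\ref{thm:bias} applied to $\Fcal'=\Fcal|_{B_r(v)}$ so that the maximum runs only over links inside the ball, which is exactly your argument. Your Loewner ordering $\mathbf L^{\tru}_B\preceq\mathbf L^{\rep}_B\preceq\beta_B\mathbf L^{\tru}_B$ and the Courant--Fischer step on the common complement of $\ker\bm\delta_B$ make explicit what the paper compresses into ``this ordering transfers to the Laplacians and hence to $\lambda_2$.''
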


Locality thus offers a double advantage: local neighborhoods not only
reconcile faster (Proposition~\ref{prop:scaling}), their reported consistency
times are also more faithful (Corollary~\ref{cor:local}). A society can
therefore guarantee reliable near-real-time local synchronization even if its
distant periphery degrades.

\section{Numerical Results}\label{sec:results}

We evaluate the framework in two regimes: elicited LLM agent societies, to
test whether the trust shortfall of Theorem~\ref{thm:main} is real, and
simulated sheaves swept across topologies and sizes, to test the bounds of
Proposition~\ref{prop:scaling} and Theorem~\ref{thm:bias}.

\begin{figure}[tb]
  \centering
  \includegraphics[width=1\linewidth]{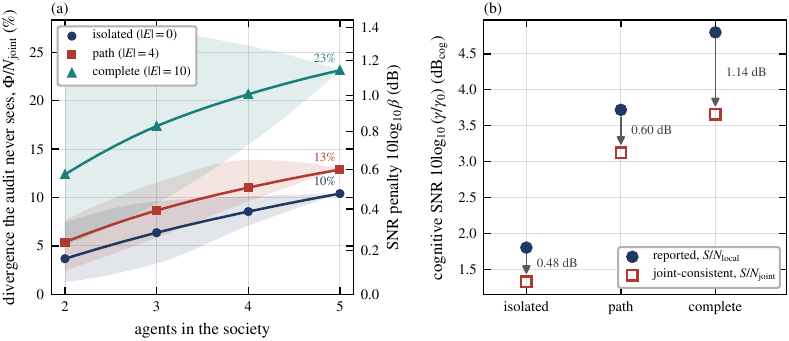}
  \caption{(a)~Unseen divergence share $\Ph/\Njoint$ across network topologies as a function of society size; the right axis carries the same measurement as the SNR penalty $10\log_{10}\beta$. (b)~Reported ($S/\Nloc$) versus joint ($S/\Njoint$) cognitive SNR in $\dBcog$; their separation is that penalty at $n=5$, and is independent of the reference (Corollary~\ref{cor:beta}).}
  \label{fig:identity}
\end{figure}

\subsection{Experimental Setup}\label{ssec:setup}

Our experiment runs five xApps with overlapping E2SM-KPM observation masks
(Fig.~\ref{fig:sys}). All agents track a shared \texttt{MOB} counter
(\texttt{MM.HoExeFailCell}), establishing a baseline correlation, and each is
a locally hosted \texttt{Mistral 7B} model (temperature $0.6$, a single sample
per message). To separate interaction effects from that baseline, identical
scenarios and seeds run across three topologies: (i)~an isolated control
without communication, where correlation stems solely from the shared
counter; (ii)~a path topology along a linear chain; and (iii)~a complete
topology with full peer context.

Each agent reports its own assessment of the cell, using its neighbors'
messages purely as context rather than claims to restate, as
in~\eqref{eq:msg}. The expectations $q_v$ are elicited from the same model
given $o_v$ alone, with no neighbor messages. Generator and auditor therefore
share a prior, which understates $N_v$: every number below is the optimistic
end of what an operator auditing against an independent reference would
measure. To measure asserted precision without altering live communication,
each agent is prompted a second time with a fresh seed, and this second
answer is never transmitted. Pooled across scenarios, the squared difference
between the two responses gives $\sigma^2_{v,c}$, floored at two score points
so that identical answers cannot generate infinite trust; the signal~$S_v$ from~\eqref{eq:signal} is then the sum of the reciprocal variances. Because
every quantity reported below is a ratio in which $S_v$ cancels, none of them
depends on this elicitation: it fixes the absolute SNRs of
Fig.~\ref{fig:identity}b, not their separation.

\begin{figure}[tb]
  \centering
  \includegraphics[width=0.9\linewidth]{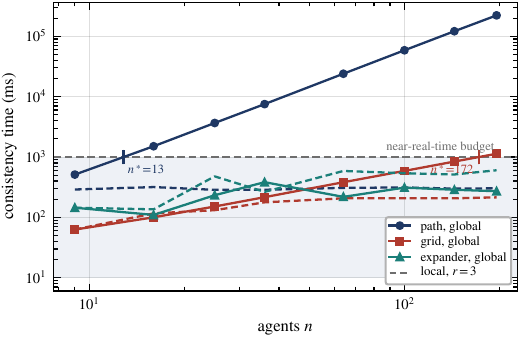}
  \caption{Consistency time versus society size (12 random sheaves per point; 10\,ms per diffusion round). Solid lines denote global consistency times; dashed lines denote local radius-$3$ audits. The shaded band marks the $1$\,s near-real-time budget.}
  \label{fig:scaling}
\end{figure}

\subsection{The Composition Law}

Per-link audits miss joint divergence, creating a blind spot that expands with
network scale and density (Fig.~\ref{fig:identity}a). At $n=5$, hidden
divergence reaches $10\%$, $13\%$ and $23\%$ ($\beta\approx1.12$, $1.15$, and
$1.30$) for the isolated, path, and complete societies. The $10\%$ isolated
baseline forms a structural floor from the shared counter. Communication
expands the gap: a complete topology hides nearly double a path's share, with
$13$ of its $23$ points driven by peer persuasion rather than shared data.

This shortfall distorts the trust layer by inflating reported confidence
(Fig.~\ref{fig:identity}b). Inter-agent context sharpens beliefs and raises
both the reported and the joint cognitive SNR, but the reported one grows
faster. Their separation is the penalty of Corollary~\ref{cor:beta}, widening
from $0.48$\,dB with isolated agents to $0.60$\,dB along a path and $1.14$\,dB
in a complete society. Dense societies thus present a paradox: they report the
highest confidence precisely where the audit harbors its largest blind spot.

\subsection{The Scaling Law}

Sweeping network size ($n=9\dots196$) confirms that topology dictates
real-time feasibility (Fig.~\ref{fig:scaling}), matching
Table~\ref{tab:scaling}: (i)~path topologies scale quadratically,
$\Theta(n^2)$, exceeding the $1$\,s near-real-time budget at $n^\ast=13$ agents
($0.5$\,s to $>200$\,s); (ii)~grids scale linearly, $\Theta(n)$, breaching it
at $n^\ast=172$ ($60$\,ms to $1.1$\,s); and (iii)~expanders maintain flat
$\Theta(1)$ consistency times ($100$--$400$\,ms), remaining within budget at
every size. Local audits (dashed lines) bypass the global limits of sparse
networks: a radius-$3$ audit converges in $\approx300$\,ms on paths regardless
of size, roughly $700\times$ faster than the global consistency time at
$n=196$. On grids, local times plateau near $200$\,ms. On expanders, local and
global consistency times coincide, as Proposition~\ref{prop:scaling} predicts:
local shortcuts are unnecessary when global mixing is already fast.

\subsection{The Consistency-Time Law}

\begin{figure}[tb]
  \centering
  \includegraphics[width=0.9\linewidth]{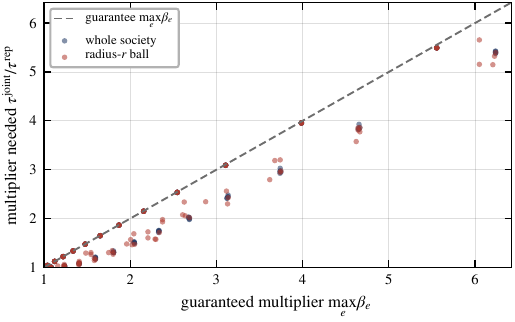}
\caption{The multiplier $\tau^{\tru}/\tau^{\rep}$ needed versus the one Theorem~\ref{thm:bias} guarantees, over 120 random sheaves on $n=100$ agents, each certified globally and on a radius-$3$ ball.}
  \label{fig:clock}
\end{figure}

We evaluate Theorem~\ref{thm:bias} as an operational budgeting rule across
120 random sheaves, comparing the multiplier actually needed,
$\tau^{\tru}/\tau^{\rep}$, against the maximum link inflation $\max_e\beta_e$
(Fig.~\ref{fig:clock}). The bound holds in every trial. Operators can
therefore prevent timing violations by scaling the reported consistency time
by $\max_e\beta_e$, computed from the message streams through $\Ph$.
Specifically, (i)~trials land on the diagonal across scales, showing that the
bound is tight; (ii)~not knowing where the inflation sits costs only minor
slack, under $1.5\times$ (e.g., $3.0\times$ needed against $3.7\times$
guaranteed); and (iii)~radius-$3$ balls obey the same rule with the maximum
taken over the links inside the ball (Corollary~\ref{cor:local}), none of
their points lying above the diagonal. A neighborhood's consistency time can
thus be budgeted from the inflation of its own links alone, whatever the
audit quality elsewhere in the society.

\section{Conclusion}\label{sec:conclusion}

Link-by-link trust evaluation keeps an O-RAN RIC within its tight control loop
but obscures how agent conclusions depend on one another. This blind spot is
exact: the joint divergence exceeds the reported local sum by the total
correlation of the transmitted messages. In agent societies this creates a
trust paradox: as communication increases, reported confidence rises while
unobserved dependence hides up to $23\%$ of the joint divergence in dense
networks. Modeling the control plane as a cellular sheaf shows that this
mis-measured audit still certifies valid consensus states, sacrificing only
convergence speed. Reconciliation decomposes by metric counter, making
consensus as slow as the worst-connected dimension, and consistency time
grows quadratically with network diameter: path topologies breach a $1$\,s
near-real-time budget at $n^\ast=13$ agents and grids at $n^\ast=172$, whereas
radius-$3$ local audits converge in $\approx300$\,ms regardless of network
size. Multiplying the reported consistency time by the worst-link inflation
$\max_e\beta_e$ provides a safe operational bound. Two practical strategies
emerge to secure agentic control planes: (i)~architecturally, by minimizing
network diameter, deploying expanders, or relying on local audits; and
(ii)~informationally, by estimating unobserved message correlations directly
from relayed traffic, without access to proprietary agent internals.

\bibliographystyle{IEEEtran}
\bibliography{refs}

\end{document}